\newtheorem{theorem}{Theorem}
\newcommand{\myincludegraphics}
{\includegraphics[trim=0.5cm 0cm 0.5cm 0.7cm, clip=true, width=1\columnwidth]}
\newcommand{\raisecapt}{\vspace{-0.1cm}}
\newcommand{\myVM}[3]{\mathbf{#1}_{\mathrm{#2}}^{#3}} 
\newcommand{\Norm}[3]{\left\lVert #1 \right\rVert_{#2}^{#3}} 
\newcommand{\tr}[3]{\mathrm{tr}\left\{\mathbf{#1}_{\mathrm{#2}}^{#3}\right\}}
\newcommand{\eq}[1]{(#1)}
\newcommand{\Diag}[1]{\mathrm{diag}\{#1\}}
\newcommand{\Abs}[2]{\left\lvert #1 \right\rvert^{#2}}
\newcommand{\Exp}[1]{\mathbb{E}\left\{ #1 \right\}}
\newcommand{\betad}[0]{\beta_{\mathrm{d}}}
\newcommand{\betabr}[0]{\beta_{\mathrm{br}}}
\newcommand{\betaru}[0]{\beta_{\mathrm{ru}}}
\newcommand{\ab}[1]{\myVM{a}{b}{#1}}
\newcommand{\ar}[1]{\myVM{a}{r}{#1}}
\newcommand{\hru}[1]{\myVM{h}{ru}{#1}}
\newcommand{\hd}[1]{\myVM{h}{d}{#1}}
\newcommand{\Hbr}[1]{\myVM{H}{br}{#1}}
\begin{document}

\bstctlcite{IEEEexample:BSTcontrol} 

\title{Phase Dependent Loss Analysis for RIS Systems}

\author{\IEEEauthorblockN{%
		Ikram Singh\IEEEauthorrefmark{1}, %
		Peter J. Smith\IEEEauthorrefmark{2}, %
		Pawel A. Dmochowski\IEEEauthorrefmark{1}}
	\IEEEauthorblockA{\IEEEauthorrefmark{1}%
		School of Engineering and Computer Science, Victoria University of Wellington, Wellington, New Zealand}
	\IEEEauthorblockA{\IEEEauthorrefmark{2}%
		School of Mathematics and Statistics, Victoria University of Wellington, Wellington, New Zealand}
	\IEEEauthorblockA{email:%
		~\{ikram.singh,peter.smith,pawel.dmochowski\}@ecs.vuw.ac.nz
	}%
}

\maketitle
\begin{abstract}
In this paper we focus on phase dependent loss (PDL), an important aspect of reconfigurable intelligent surfaces (RIS)  where the signals reflected from the RIS elements are attenuated by varying amounts depending on the phase rotation provided by the element. To evaluate the effects of PDL,  we analyse the SNR of a SIMO RIS-aided wireless link. We assume that the channel between the base station (BS) and RIS is a rank-1 LOS channel while the user (UE)-BS and UE-RIS are correlated Rayleigh channels. The RIS design is optimal in the absence of PDL and  maximizes the SNR in this scenario. Specifically, we derive a closed form expression for the mean SNR in the presence of PDL.  The attenuation function used for PDL was developed from a detailed circuit analysis of RIS elements. Leveraging the derived results, we analytically characterise the impact of PDL on the mean SNR. Numerical results are conducted to validate the derived expressions and verify the analysis.
\end{abstract}
\IEEEpeerreviewmaketitle
%
%
\section{Introduction}\label{Sec: Introduction}
Research into reconfigurable intelligent surfaces (RISs) has shown that intelligently tuning the RIS phases can significantly  improve performance in wireless systems. However, such works usually assume that reflections from the RIS elements experience a  constant attenuation. This is an over-simplification and assumes that the power of the reflected signal is independent of the phase shift at each RIS element. In this paper, we focus on the more general case \cite{PracPhaseShift}, where the RIS phases affect the reflected signal strength, i.e., phase dependent loss (PDL). As an initial investigation, we focus on the effects of PDL on single user (SU) systems.

For SU systems, \cite{9095301} derives a closed form expression for the mean SNR where the user (UE) to RIS and RIS to base station (BS) channels experience Rayleigh fading and the direct channel between UE and BS is absent.
\cite{ISinghRayleigh} derives an exact expression for the optimal uplink (UL) mean SNR for systems where the UE-BS channel is rank-1 LOS and the UE-RIS and UE-BS channels are correlated Rayleigh. The LOS assumption in the RIS-BS channel has been considered and motivated in numerous works (e.g \cite{Max_Min}). The authors in  \cite{ISinghRayleigh} leverage the mean SNR expression to provide insight on the impact of correlation on the mean SNR.
In \cite{RayleighRicean}, the authors extend the exact mean SNR derivation in \cite{ISinghRayleigh} to systems where the UE-BS and UE-RIS channels are correlated Ricean and derive a tight approximation to the mean rate. The authors again, leverage the mean SNR expression to provide insight on the impact of correlation and the Rician K-factor on the mean SNR. 
However, the analysis in \cite{9095301,ISinghRayleigh,RayleighRicean} assumes either perfect RIS reflection or reflections with  constant attenuation.

In \cite{PracPhaseShift}, a mathematical model is proposed for PDL. Numerical results in \cite{PracPhaseShift} show that the  model accurately matches the reflective response of a detailed circuit model for a semiconductor device used to construct typical RIS elements. Furthermore,  characteristics of the circuit model resemble experimental results in the literature \cite{PracPhaseShift}.  

To best of our knowledge, no analysis is available to characterise optimal system performance with PDL. Hence, the contributions of this paper are as follows:
\begin{itemize}
    \item An exact mean SNR expression is derived for the optimal RIS phases using the PDL model in \cite{PracPhaseShift}. The optimal RIS design is based on the lossless case as there is no known optimal design in the presence of PDL. A simple rule of thumb is also provided to evaluate the effects of PDL.
    \item We analytically characterise the impact of the parameters in the loss function (attenuation function) on the mean SNR. The loss function is defined by three parameters; $L_{\mathrm{min}}$: the minimum amplitude of the loss function; $\alpha$: the steepness of the loss function; $\theta$: the shift of the loss function. These parameters are dependent on the circuit used to construct typical semiconductors for RIS reflective elements.
    \item Numerical results validate the derived SNR expression and verify the impact of the loss function parameters $L_{\mathrm{min}},\alpha,\theta$ on the mean SNR. We show that any impact caused by the loss function on the mean SNR becomes more pronounced as the size of the RIS  increases. For typical parameter values, these effects are significant.
\end{itemize}
\textit{Notation:} $\Exp{\cdot}$ represents statistical expectation. $\Re\left\{ \cdot \right\}$ is the Real operator. $\Norm{\cdot}{2}{}$ denotes the $\ell_{2}$ norm. Upper and lower boldface letters represent matrices and vectors, respectively. $\mathcal{CN}(\boldsymbol{\mu},\mathbf{Q})$ denotes a complex Gaussian distribution with mean $\boldsymbol{\mu}$ and covariance matrix $\mathbf{Q}$. $\mathcal{U}[a,b]$ denotes a uniform distribution on $[a,b]$. The transpose, Hermitian transpose and complex conjugate operators are denoted as $(\cdot)^{T},(\cdot)^{H},(\cdot)^{*}$, respectively. The trace and diagonal operators are denoted by $\tr{\cdot}{}{}$ and $\text{diag}\{\cdot\}$, respectively. The angle of a vector $\myVM{x}{}{}$ of length $N$ is defined as $\angle \myVM{x}{}{} = [ \angle x_{1},\ldots, \angle x_{N} ]^{T}$ along with $\Abs{\mathbf{x}}{} = [\Abs{x_1}{},\ldots,\Abs{x_N}{}]^T$. The exponential of a vector is defined as $e^{\myVM{x}{}{}} = [ e^{x_{1}},\ldots,  e^{x_{N}} ]^{T}$. $\otimes$ denotes the Kronecker product. $\mathrm{B}\left( z, w \right)$ denotes the beta function with parameters $z,w$. $\mathbf{1}_N$ denotes an $N \times 1$ vector with unit entries.

\section{System Model}\label{Sec: System Model}
As shown in Fig.~\ref{Fig: System Model}, we examine a RIS-aided single input multiple output (SIMO) system where a RIS with $N$ reflective elements is located close to a BS with $M$ antennas such that a rank-1 LOS condition is achieved between the RIS and BS.
\begin{figure}[h]
	\centering
	\includegraphics[width=5cm]{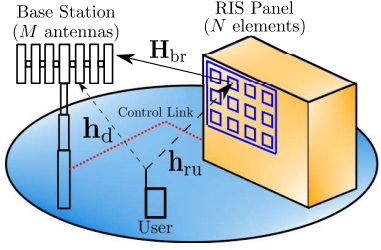}
	\raisecapt\caption{System model (the red dashed line is the control link for the RIS phases).}
	\label{Fig: System Model}
\end{figure}
\subsection{Channel Model}\label{Sec: Channel Model}
Let $\myVM{h}{d}{} \in \mathbb{C}^{M \times 1}$, $\myVM{h}{ru}{} \in \mathbb{C}^{N \times 1}$, $\myVM{H}{br}{} \in \mathbb{C}^{M \times N}$ be the UE-BS, UE-RIS and RIS-BS channels, respectively. The diagonal matrix $\myVM{\Phi}{}{} \in \mathbb{C}^{N \times N}$, where $\mathbf{\Phi}_{rr} = e^{j\phi_{r}}$ for $r=1,2,\ldots,N$, contains the reflection coefficients for each RIS element. The global UL channel is thus represented by
\begin{equation}\label{Eq: 1}
	\mathbf{h} = \myVM{h}{d}{} + \myVM{H}{br}{} \myVM{\Phi}{}{} \mathbf{L}\left( \myVM{\Phi}{}{} \right) \myVM{h}{ru}{},
\end{equation}
with $\mathbf{L}\left( \myVM{\Phi}{}{} \right) = \Diag{L(\phi_1),\ldots,L(\phi_N)}$, where the amplitude of the reflected signal at the $n^{\mathrm{th}}$ element is attenuated by the loss factor, $L(\phi_n) \in [0,1]$. Note that although the analysis in the paper is applicable to any loss function, we adopt the following practical loss model for RIS reflective elements based on detailed modeling of the RIS circuit elements in \cite{PracPhaseShift}
\begin{equation}\label{Eq: Phase Loss Function}
    L(\phi_n) = 
    (1 - L_{\mathrm{min}})\left( \frac{\sin(\phi_n + \theta) + 1}{2} \right)^\alpha + L_{\mathrm{min}}.
\end{equation}
The PDL model in \eqref{Eq: Phase Loss Function} gives losses which are dependent on the RIS phases. The variables $L_{\mathrm{min}} \geq 0, \theta \geq 0$ and $\alpha \geq 0$ are constants dependent on specific circuit implementations \cite{PracPhaseShift}. $L_{\mathrm{min}}$ controls the minimum amplitude of the loss function, $\alpha$ controls the steepness of the loss function and $\theta$ control the mid-point position of the loss function. Note that perfect RIS phase reflection can be achieved by setting $L_{\mathrm{min}} = 1$ or equivalently $\alpha = 0$.

For $ \myVM{h}{d}{}$ and  $\myVM{h}{ru}{}$, we assume correlated Rayleigh channels:
\begin{equation}\label{Eq: channels hd hru}
    \myVM{h}{d}{} = \sqrt{\beta_{\textrm{d}}} \myVM{R}{d}{1/2} \myVM{u}{d}{},
    \quad
    \myVM{h}{ru}{} = \sqrt{\beta_{\textrm{ru}}} \myVM{R}{ru}{1/2} \myVM{u}{ru}{},
\end{equation}
where $\beta_{\textrm{d}}$ and $\beta_{\textrm{ru}}$ are the link gains, $\myVM{R}{d}{}$ and $\myVM{R}{ru}{}$ are the correlation matrices for UE-BS and UE-RIS links respectively and $\myVM{u}{d}{},\myVM{u}{ru}{} \sim \mathcal{CN}(\mathbf{0},\mathbf{I})$. The rank-1 LOS channel from RIS to BS has link gain $\beta_{\textrm{br}}$ and is given by $\myVM{H}{br}{} = \sqrt{\beta_{\textrm{br}}}\myVM{a}{b}{}\myVM{a}{r}{H}$ where $\myVM{a}{b}{}$ and $\myVM{a}{r}{}$ are topology specific steering vectors at the BS and RIS respectively. Particular examples of steering vectors for a vertical uniform rectangular array (VURA) are in Sec.~\ref{Sec: Results}.

Note, that the correlation matrices, $\myVM{R}{ru}{}$ and $\myVM{R}{d}{}$, can represent any correlation model. For simulation purposes, we will use the well-known exponential decay model for correlation at the BS and adopt the sinc correlation model for correlation at the RIS \cite[Eq.~(11)]{RISCorrModel}. Hence,
\begin{equation}\label{Eq: Correlation Model}
\begin{split}
&(\myVM{R}{ru}{})_{ik} = \mathrm{sinc}(2 d_{i,k}) \text{ with } \mathrm{sinc}(2 d_r) = \rho_{\mathrm{ru}},
\\
&(\myVM{R}{d}{})_{ik} = \rho_{\mathrm{d}}^{\frac{d_{i,k}}{d_{\textrm{b}}}},
\end{split}
\end{equation}
where $0 \leq \Abs{\rho_{\text{ru}}}{} \leq 1$, $0 \leq \Abs{\rho_{\text{d}}}{} \leq 1$. $d_{i,k}$ is the distance between the $i^{\textrm{th}}$ and $k^{\textrm{th}}$ antenna/element at the BS/RIS. $d_{\textrm{b}}$ is the nearest-neighbour BS antenna separation measured in wavelength units. $\rho_{\mathrm{d}}$ and $\rho_{\mathrm{ru}}$ are the nearest neighbour BS antenna and RIS element correlations, respectively. Observe that the correlation model used at the RIS is a sinc function and the correlation level, $\rho_{\mathrm{ru}}$, is directly linked to the RIS element spacing $d_r$.

\subsection{Optimal RIS Matrix}\label{SubSec: Optimal RIS matrix}
Using  \eq{\ref{Eq: 1}}, the received signal at the BS is,
$
\mathbf{r} = \left(\myVM{h}{d}{} + \myVM{H}{br}{} \myVM{\Phi}{}{} \mathbf{L}\left( \myVM{\Phi}{}{} \right) \myVM{h}{ru}{} \right)s + \mathbf{n} \triangleq \myVM{h}{}{}s + \mathbf{n},
$
where $s$ is the transmitted signal with power $E_{s}$ and $\mathbf{n} \sim \mathcal{CN}(\mathbf{0},\sigma^{2}\mathbf{I})$. For a SU system, matched filtering (MF) is optimal, with UL SNR, given by
$
\text{SNR} = \bar{\tau}\Norm{\mathbf{h}}{2}{2},
$
where $\bar{\tau} = \frac{E_{s}}{\sigma^{2}}$. Thus, to maximize the SNR with lossless RIS reflection ($\mathbf{L}\left( \myVM{\Phi}{}{} \right) =\mathbf{I}$), the optimal RIS matrix is given by \cite[Eq.~(4)]{RayleighRicean},
\begin{equation}\label{Eq: Optimum PHI}
	\myVM{\Phi}{}{} = 
	\psi
	\Diag{e^{j\angle\myVM{a}{r}{}}}
	\Diag{e^{-j\angle\myVM{h}{ru}{}}},
\end{equation}
where $\psi = \frac{\myVM{a}{b}{H} \myVM{h}{d}{}}{\Abs{\myVM{a}{b}{H} \myVM{h}{d}{}}{}}$. Thus, the UL SNR is 
\begin{align}\label{Eq: SNR Eq}
	\text{SNR} &= 
	\bar{\tau} \Big( \hd{H}\hd{} + 
	2\Re\left\{ \hd{H}\Hbr{}\myVM{\Phi}{}{} \mathbf{L}\left( \myVM{\Phi}{}{} \right) \hru{} \right\} \notag \\
	& \quad + \hru{H} \mathbf{L}\left( \myVM{\Phi}{}{} \right) \mathbf{\Phi}^{H}
	\Hbr{H}\Hbr{} \myVM{\Phi}{}{} \mathbf{L}\left( \myVM{\Phi}{}{} \right) \hru{} \Big).
\end{align}
In this paper, we assume that the optimal lossless design in \eqref{Eq: Optimum PHI} is used in the presence of phase dependent loss. This is reasonable as an optimal design in the presence of loss is unknown. Note that in obtaining \eqref{Eq: SNR Eq}, we set $\mathbf{L}\left( \myVM{\Phi}{}{} \right)^H = \mathbf{L}\left( \myVM{\Phi}{}{} \right)$ since $\mathbf{L}\left( \myVM{\Phi}{}{} \right)$ is a positive real valued diagonal matrix.

\section{Mean SNR}\label{Sec: Mean SNR}
Here, we provide an exact result for the mean SNR, $\Exp{\mathrm{SNR}}$,
building on the results in \cite{ISinghRayleigh} for the mean SNR in a lossless scenario.

\begin{theorem}
The mean SNR is given by
\begin{align}
    \Exp{\mathrm{SNR}}
    &=
    \bar{\tau}\Big( \betad M + 
    \sqrt{\betabr\betad\betaru}\Norm{\myVM{R}{d}{1/2}\ab{}}{}{}N\mu_1\frac{\pi}{2} \notag \\
    & \quad + 
    \betaru \betabr M(N\mu_2 + F) \Big) \label{Eq: Mean SNR Final},
\end{align}
with
\begingroup
\allowdisplaybreaks
\begin{align}
    \mu_1 &= \frac{4^\alpha(1-L_{\mathrm{min}})}{\pi}\mathrm{B}\left( \frac{2\alpha+1}{2}, \frac{2\alpha+1}{2} \right) + L_{\mathrm{min}} \label{Eq: mu1 final}, \\
    \mu_2 &= \frac{2 L_{\mathrm{min}}(1 - L_{\mathrm{min}}) 4^{\alpha}}{\pi} \mathrm{B}\left( \frac{2\alpha+1}{2}, \frac{2\alpha+1}{2} \right) \notag \\
    &+
    L^2_{\mathrm{min}}
    +
    \frac{(1 - L_{\mathrm{min}})^2 16^\alpha}{\pi} \mathrm{B}\left( \frac{4\alpha+1}{2}, \frac{4\alpha+1}{2} \right) \label{Eq: mu2 final}, \\
    F &= \underset{r \neq s}{\sum_{r=1}^{N} \sum_{s=1}^{N}} \dfrac{\pi}{4}\left( 1 - \left\lvert\rho_{rs}\right\rvert^2\right)^2 {}_{2}F_{1}\left(\frac{3}{2},\frac{3}{2};1;\left\lvert\rho_{rs}\right\rvert^2 \right) L_{rs}, \label{Eq: F final}
\end{align}
\endgroup
where
\begingroup
\allowdisplaybreaks
\begin{align}
    L_{rs} &= \Exp{L(\phi_r)L(\phi_s)} \notag \\
    &= \int_{0}^{2\pi}
    \int_{0}^{2\pi}
    L(s+\angle(\ar{})_r-2\pi) L(t+\angle(\ar{})_s-2\pi) \notag \\
    & \quad \times g_{rs}(t-s) \ ds dt, \label{Eq: Lrs integral}\\
    g_{rs}(x) &=  \frac{1-\Abs{\rho_{rs}}{2}}{4 \pi^2} \left(
    \frac{1}{1 - v_{rs}(x)^2} - \frac{v_{rs}(x) \cos^{-1}(v_{rs}(x))}{(1 - v_{rs}(x)^2)^{3/2}}
    \right) \label{Eq: g_rs}, \\
    v_{rs}(x) &= \Abs{\rho_{rs}}{} \cos(x-\angle(-\rho_{rs})) \label{Eq: v_rs},
\end{align}
${}_{2}F_{1}(\cdot)$ is the Gaussian hypergeometric function, $\rho_{rs} = \left(\myVM{R}{ru}{} \right)_{rs}$. 
\endgroup
\end{theorem}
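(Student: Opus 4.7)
The plan is to decompose the SNR in \eqref{Eq: SNR Eq} into the three pieces $T_1=\hd{H}\hd{}$, $T_2=2\Re\{\hd{H}\Hbr{}\myVM{\Phi}{}{}\mathbf{L}(\myVM{\Phi}{}{})\hru{}\}$, and $T_3=\hru{H}\mathbf{L}(\myVM{\Phi}{}{})\mathbf{\Phi}^{H}\Hbr{H}\Hbr{}\myVM{\Phi}{}{}\mathbf{L}(\myVM{\Phi}{}{})\hru{}$, and to evaluate $\Exp{T_1},\Exp{T_2},\Exp{T_3}$ separately. The key tools will be: (i) the rank-1 factorization $\Hbr{}=\sqrt{\betabr}\,\ab{}\ar{H}$, which collapses the matrix products to scalars; (ii) independence of $\hd{}$ and $\hru{}$; (iii) phase--magnitude independence of individual circularly symmetric complex Gaussian entries; and (iv) the $2\pi$-periodicity of $L(\cdot)$, which lets the uniform random phase $\angle\psi$ arising from \eqref{Eq: Optimum PHI} be absorbed into the integration.

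The direct piece is immediate: $\Exp{T_1}=\betad\tr{R}{d}{}=\betad M$, because $\myVM{R}{d}{}$ has unit diagonal. For $T_2$, substituting \eqref{Eq: Optimum PHI} into the rank-1 form of $\Hbr{}$ and using $|a_{\mathrm{r},n}|=1$ for standard steering vectors reduces the matrix product to $\Hbr{}\myVM{\Phi}{}{}\mathbf{L}(\myVM{\Phi}{}{})\hru{}=\sqrt{\betabr}\,\psi\,\ab{}\sum_{n=1}^N L(\phi_n)|h_{\mathrm{ru},n}|$. The definition of $\psi$ then gives $\psi\,\hd{H}\ab{}=|\ab{H}\hd{}|$, so $T_2=2\sqrt{\betabr}\,|\ab{H}\hd{}|\sum_n L(\phi_n)|h_{\mathrm{ru},n}|$. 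Since $|\ab{H}\hd{}|$ is a Rayleigh magnitude with mean $\tfrac12\sqrt{\pi\betad}\Norm{\myVM{R}{d}{1/2}\ab{}}{}{}$ and each $\phi_n$ is uniform on $[0,2\pi)$ independent of $|h_{\mathrm{ru},n}|$, the expectation factors into the cross term of \eqref{Eq: Mean SNR Final}. The closed form \eqref{Eq: mu1 final} for $\mu_1=\Exp{L(\phi)}$ then follows from the identity $(1+\sin\phi)/2=\sin^2(\phi/2+\pi/4)$, the classical integral $\int_0^{\pi/2}\sin^{2\alpha}u\,du=\tfrac12\mathrm{B}(\alpha+\tfrac12,\tfrac12)$, and the Legendre duplication formula to recast as the symmetric beta.

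For $T_3$, the identity $\Hbr{H}\Hbr{}=\betabr M\,\ar{}\ar{H}$ reduces it to $\betabr M\bigl(\sum_n L(\phi_n)|h_{\mathrm{ru},n}|\bigr)^2$. The diagonal terms of the square expand directly to $\betabr M\cdot N\betaru\mu_2$, with $\mu_2=\Exp{L(\phi)^2}$ obtained by the same periodic-integral technique applied to $L(\phi)^2$, giving \eqref{Eq: mu2 final}. The off-diagonal terms, $\betabr M\sum_{r\neq s}\Exp{L(\phi_r)L(\phi_s)|h_{\mathrm{ru},r}||h_{\mathrm{ru},s}|}$, are the hard part, because there the phases $\phi_r,\phi_s$ and the magnitudes $|h_{\mathrm{ru},r}|,|h_{\mathrm{ru},s}|$ all live on the same correlated pair $(h_{\mathrm{ru},r},h_{\mathrm{ru},s})$ with correlation $\rho_{rs}$.

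My plan for this crux is to first absorb $\angle\psi$ using the $2\pi$-periodicity of $L$ together with the fact that $\angle\psi$ is uniform and independent of $(h_{\mathrm{ru},r},h_{\mathrm{ru},s})$, which leaves an expectation purely over the bivariate complex Gaussian. Working in polar coordinates, the joint density of the two phases depends only on their difference and coincides with $g_{rs}$ in \eqref{Eq: g_rs}, a standard bivariate-Rayleigh angular distribution. The contribution then separates into the classical correlated-Rayleigh magnitude moment $\Exp{|h_{\mathrm{ru},r}||h_{\mathrm{ru},s}|}=\betaru\tfrac{\pi}{4}(1-|\rho_{rs}|^2)^2{}_{2}F_{1}(\tfrac32,\tfrac32;1;|\rho_{rs}|^2)$ multiplied by $L_{rs}$ in \eqref{Eq: Lrs integral}, which sums to $F$ in \eqref{Eq: F final}. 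Combining $\Exp{T_1}+\Exp{T_2}+\Exp{T_3}$ and scaling by $\bar\tau$ then reproduces \eqref{Eq: Mean SNR Final}.
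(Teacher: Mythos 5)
Your proposal follows essentially the same route as the paper's Appendix~A: the identical three-term decomposition of \eqref{Eq: SNR Eq}, the rank-1 collapse of $\Hbr{}$ to reduce the cross and quadratic terms to $|\ab{H}\hd{}|\sum_n L(\phi_n)|h_{\mathrm{ru},n}|$ and its square, the evaluation of $\mu_1$ and $\mu_2$ via $1+\sin(x)=2\sin^2(x/2+\pi/4)$ and the symmetric Beta integral, and the off-diagonal treatment via the bivariate Gaussian phase density $g_{rs}$ combined with the ${}_{2}F_{1}$ correlated-Rayleigh product moment. The only caveat is that the separation $\Exp{|h_{\mathrm{ru},r}||h_{\mathrm{ru},s}|\,L(\phi_r)L(\phi_s)}=\Exp{|h_{\mathrm{ru},r}||h_{\mathrm{ru},s}|}\,\Exp{L(\phi_r)L(\phi_s)}$ that you assert for the correlated ($\rho_{rs}\neq 0$) off-diagonal terms is nontrivial --- the magnitude pair and phase pair of a correlated circularly-symmetric Gaussian pair are not independent --- but the paper makes exactly the same unjustified leap at exactly the same point, so your argument is neither more nor less complete than theirs at the crux.
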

\begin{proof}
See App.~\ref{App: Derivation of mean SNR} for the derivation of \eqref{Eq: Mean SNR Final}.
\end{proof}

Note that $F$ is the only variable dependent on the correlations in $\myVM{h}{ru}{}$ and also note that the variable $L_{rs}$ is a double integral of the loss function.  In Sec.~\ref{SubSec: No Correlation Lrs} and Sec.~\ref{SubSec: Full Correlation Lrs}, we derive exact results for special cases of $F$, $L_{rs}$ when $\Abs{\rho_{rs}}{} \in \{0,1\}$. These correlation extremes provide useful benchmarks to evaluate the SNR trends.

\subsection{Special Case 1: Uncorrelated $\hru{}$}\label{SubSec: No Correlation Lrs}
From \eqref{Eq: Lrs integral}, when $\hru{}$ is uncorrelated then $\phi_{r}$ and $\phi_{s}$ are i.i.d for $r \neq s$. Hence,
\begin{align}\label{Eq: Lrs when rhors is zero}
    \Exp{L(\phi_{r})L(\phi_{s})}
    = \left(\Exp{L(\phi_{r})}\right)^2 
    = \mu_1^2.
\end{align}
No correlation in $\hru{}$ also implies that $\rho_{mn} = 0$ for all $m \neq n$. Using this result, \eqref{Eq: Lrs when rhors is zero} and \cite[Eq.~(10)]{ISinghRayleigh},  $F$ simplifies to
\begin{align}\label{Eq: F when rhors is zero}
    F_u = \frac{\mu_1^2 N(N-1) \pi}{4}.
\end{align}
Therefore, the mean SNR for an uncorrelated $\hru{}$ channel is,
\begin{align}
    \Exp{\mathrm{SNR}}
    &=
    \bar{\tau}\Big( \betad M + 
    \sqrt{\betabr\betad\betaru}\Norm{\myVM{R}{d}{1/2}\ab{}}{}{}N\mu_1\frac{\pi}{2} \notag \\
    & \quad + 
    \betaru \betabr M(N\mu_2 + F_u) \Big) \label{Eq: Mean SNR Final uncorrelated}.
\end{align}
Note that the mean SNR expression depends on the PDL solely through the simple functions $\mu_1$ and $\mu_2$.
\subsection{Special Case 1: Perfect Correlation in $\hru{}$}\label{SubSec: Full Correlation Lrs}
With perfect correlation in $\hru{}$, $\rho_{rs}=1$ for $r,s=1,\ldots,N$. Hence, from \cite[Eq.~(13)]{ISinghRayleigh}, $F$ can be rewritten as  
$
    F =
    \underset{r \neq s}{\sum_{r=1}^{N} \sum_{s=1}^{N}} L_{rs} 
$
Under perfect correlation, we can exactly compute $\Exp{L(\phi_{r})L(\phi_{s})}$. Following App.~\ref{App: Derivation of mean SNR}, we can express the $i^{\mathrm{th}}$ RIS phase as $\phi_i = \angle\ab{H}\hd{} + \angle(\ar{})_i - \angle h_{\mathrm{ru},i}$. Hence, 
\begingroup
\allowdisplaybreaks
\begin{align*}
    &\Exp{L(\phi_{r})L(\phi_{s})} \notag \\
    &=
    \mathbb{E}\Big\{L(\angle\ab{H}\hd{} + \angle(\ar{})_r - \angle h_{\mathrm{ru},r}) \notag \\
    &\hspace{3em} \times
    L(\angle\ab{H}\hd{} + \angle(\ar{})_s - \angle h_{\mathrm{ru},s}) \Big\} \notag \\
    &=
    \Exp{L(w + \angle(\ar{})_r) L(w + \angle(\ar{})_s)} \notag \\
    &=
    \frac{1}{2\pi}\int_{0}^{2\pi} 
    L(w + \angle(\ar{})_r) L(w + \angle(\ar{})_s) \ dw.
\end{align*}
\endgroup
Using App.~\ref{App: Fancy integral}, the solution to the above integral is
\begingroup
\allowdisplaybreaks
\begin{align}
    &F_c 
    =  
    N(N-1)\left( \frac{A_1 A_2 2^{2\alpha + 1}}{\pi} \mathrm{B}\left(\frac{2\alpha+1}{2},\frac{2\alpha+1}{2} \right)
    +
    A_2^2 \right) \notag \\
    & + 
    \underset{r \neq s}{\sum_{r=1}^{N} \sum_{s=1}^{N}} \frac{A_1^2 2^{-2\alpha-1}}{\pi \sin(2\pi\alpha)}
    \begin{bmatrix}
        \sin(2\pi\alpha) & 1-\cos(2\pi\alpha) 
    \end{bmatrix}
    \begin{bmatrix}
        \Re\{\mathcal{I}\} \\ 
        \Im\{\mathcal{I}\}
    \end{bmatrix}  \label{Eq: F when rhors is 1}
\end{align}
\endgroup
with
\begin{equation}
    \mathcal{I} = 2 \pi (\gamma^2 - 1)^\alpha {}_{2}F_{1}(-2\alpha,2\alpha+1;1;\frac{1-\gamma_1}{2}),
\end{equation}
where $\gamma_1 = \gamma/\sqrt{\gamma^2 - 1}$ and $\gamma = \cos\left(\frac{\angle(\ar{})_r - \angle(\ar{})_s}{2}\right)$, $A_1 = 1 - L_{\mathrm{min}}$ and $A_2 = L_{\mathrm{min}}$.
Therefore, the mean SNR for a fully correlated channel is,
\begin{align}
    \Exp{\mathrm{SNR}}
    &=
    \bar{\tau}\Big( \betad M + 
    \sqrt{\betabr\betad\betaru}\Norm{\myVM{R}{d}{1/2}\ab{}}{}{}N\mu_1\frac{\pi}{2} \notag \\
    & \quad + 
    \betaru \betabr M(N\mu_2 + F_c) \Big) \label{Eq: Mean SNR Final fully correlated}.
\end{align}

\section{Impact of Loss Function on the mean SNR}\label{Sec: Impact of Phase Loss on mean SNR}
In this section, we explore the impact of the circuit-dependent parameters $L_{\mathrm{min}},\alpha,\theta$ on the mean SNR. These parameters only impact the variables $\mu_{1}, \mu_{2}, L_{rs}$ in the mean SNR expression \eqref{Eq: Mean SNR Final}. While the broad impact of $L_{\mathrm{min}},\alpha,\theta$ is intuitive from the loss function \eqref{Eq: Phase Loss Function}, in this section we present analysis to support and quantify these effects. 

\subsection{Phase Shift of the PDL Function: $\theta$}\label{SubSec: impact of theta on mean SNR}
The parameter, $\theta$, which controls the midpoint position of the loss function does not affect the mean SNR as $\Exp{L(\phi_r)}$ and $\Exp{L(\phi_r)L(\phi_s)}$ are averaged over an entire $2\pi$ period. Therefore, the mean SNR is independent of $\theta$.

\subsection{Steepness of the PDL Function: $\alpha$}\label{SubSec: impact of alpha on mean SNR}
The parameter $\alpha$ only affects the beta functions in $\mu_{1}, \mu_{2}$ and $L_{rs}$. From \cite[Eq.~(8.384.4)]{GradRyz}, we have
\begin{equation}\label{Eq: beta function general 1}
    \mathrm{B}\left( x, x \right)
    =
    2^{1 - 2x} \mathrm{B}\left( 1/2, x \right),
\end{equation}
which is a useful result as it appears in both $\mu_1$ and $\mu_2$.
Firstly, note that the series representation of \eqref{Eq: beta function general 1} given in \cite[Eq.~(8.382.3)]{GradRyz} shows that $\mathrm{B}\left(1/2, x \right)$ decreases in value as $x \to \infty$. Therefore, $\mathrm{B}\left( \frac{2\alpha+1}{2}, \frac{2\alpha+1}{2}\right)$ and $\mathrm{B}\left( \frac{4\alpha+1}{2}, \frac{4\alpha+1}{2}\right)$ are monotonically decreasing functions in $\alpha$ since $\alpha \geq 0$. Hence, from \eqref{Eq: mu1 final}-\eqref{Eq: mu2 final}, $\mu_1$ and $\mu_2$ benefit from having small $\alpha$. In terms of $L_{rs}$, note that $L_{rs}$ is a double integral over positive functions as $L(\phi_n) \in (0,1)$ and $g_{rs}$ is a positive function for all $v_{rs}(x)$ (see App.~\ref{App: g_rs is a positive function}). Therefore, $L_{rs}$ benefits from having small $\alpha$ since \eqref{Eq: Phase Loss Function} increases as $\alpha$ decreases. In summary, the mean SNR benefits from having a small $\alpha$ parameter.

\subsection{Minimum Amplitude of the PDL Function: $L_{\mathrm{min}}$}\label{SubSec: impact of Lmin on mean SNR}
 Let $c_1 = \frac{4^\alpha }{\pi} \mathrm{B}\left( \frac{2\alpha+1}{2}, \frac{2\alpha+1}{2}\right)$ and $c_2 = \frac{16^\alpha }{\pi} \mathrm{B}\left( \frac{4\alpha+1}{2}, \frac{4\alpha+1}{2}\right)$, then $\mu_1$ can be rewritten as
\begin{equation}\label{Eq: alternate mu_1}
    \mu_1 = 
    c_1
    +
    L_{\mathrm{min}} \left(2\pi - c_1 \right),
\end{equation}
and $L(\cdot) \in [0,1]$ implies that $L_{\mathrm{min}} \in [0,1]$. Using the results in App.~\ref{App: Alpha on beta func}, we can infer that for $\alpha \geq 0$, $2\pi > c_1$. Hence, \eqref{Eq: alternate mu_1} is an increasing function of $L_{\mathrm{min}}$. We can also rewrite $\mu_2$ as
\begin{equation}\label{Eq: alternate mu_2}
    \mu_2 = 
    L_{\mathrm{min}}^2 (2\pi + c_2 - 2c_1) + L_{\mathrm{min}}(2c_1 - 2c_2) + c_2.
\end{equation}
As above, we can use  App.~\ref{App: Alpha on beta func} to infer that $\alpha \geq 0$, $2\pi > c_2 - 2c_1$ and $c_1 > c_2$ so $\mu_2$ also increases with $L_{\mathrm{min}} \in [0,1]$. 
In terms of $L_{rs}$, note that $L_{rs}$ is a double integral over positive functions as $L(\phi_n) \in (0,1)$ and $g_{rs}$ is a positive function for all $v_{rs}(x)$ (see App.~\ref{App: g_rs is a positive function}). Therefore, $L_{rs}$ benefits from having large $L_{\mathrm{min}}$ since \eqref{Eq: Phase Loss Function} increases in value as $L_{\mathrm{min}}$ increases. In summary, the mean SNR benefits from high values of $L_{\mathrm{min}}$.

\section{Results}\label{Sec: Results}
We present numerical results to verify the analysis in Sec.~\ref{Sec: Impact of Phase Loss on mean SNR}. Firstly, note that we do not consider cell-wide averaging as the focus is on the SNR distribution over the fast fading. 
Hence, we present numerical results for fixed link gains for which the geometric model for the deployment of the UE, BS and RIS is adopted from \cite{ParameterValues} and shown in Fig.~\ref{Fig: BS-RIS-UE deployment}. 
\begin{figure}[h]
	\centering
	\includegraphics[width=5cm]{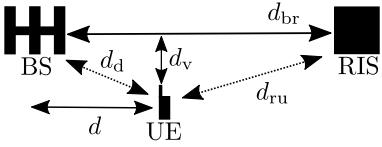}
	\raisecapt\caption{Deployment of BS, RIS and UE adapted from \cite{ParameterValues}.}
	\label{Fig: BS-RIS-UE deployment}
\end{figure}
In particular, since the RIS-BS link is LOS, we assume $\beta_{\mathrm{br}}=d_{\mathrm{br}}^{-2}$ where $d_{\mathrm{br}}=51$m. For the other channels, we use the distance-dependent path loss model,
\begin{equation}
    \beta_{\mathrm{ru}} = C_0 d_{\mathrm{ru}}^{-\alpha_{\mathrm{ru}}} 
    \quad , \quad
    \beta_{\mathrm{d}} = C_0 d_{\mathrm{d}}^{-\alpha_{\mathrm{d}}}, 
\end{equation}
where $C_0=-30$ dB is the path loss at a reference distance of 1m, $d_{\mathrm{ru}}=21.0238$m and $d_{\mathrm{d}}=30.167$m is the UE-RIS and UE-BS separation distances respectively, $\alpha_{\mathrm{ru}}=2.8$ and $\alpha_{\mathrm{d}}=3.5$ are the path loss exponents for the UE-RIS and UE-BS channels respectively. These values give the path gains of $\beta_{\mathrm{d}} = -81.7077$ dB and $\beta_{\mathrm{ru}} = -67.0360$ dB. Distances $d_{\mathrm{ru}}$ and $d_{\mathrm{d}}$ were computed using elementary trigonometry where $d=30$m and $d_{\mathrm{v}}=1$m.
The power of the transmitted signal is $E_s = 1$ and the noise power is $\sigma^2=-65$ dBm.

As stated in Sec. \ref{Sec: Channel Model}, the steering vectors for $\myVM{H}{br}{}$ are not restricted to any particular structure. However, for simulation purposes, we will use the VURA model as outlined in \cite{CMiller}, but in the $y-z$ plane with equal spacing in both dimensions at both the RIS and BS. The $y$ and $z$ components of the steering vector at the BS are $\myVM{a}{b,y}{} $ and $\myVM{a}{b,z}{}$ which are given by
\begin{align*}
& [1, e^{j2\pi d_{\mathrm{b}} \sin(\theta_{\mathrm{A}})\sin(\omega_{\mathrm{A}})}, \ldots, e^{j2\pi d_{\mathrm{b}} (M_{y}-1) \sin(\theta_{\mathrm{A}})\sin(\omega_{\mathrm{A}})}]^{T} \\
& \text{and } [1, e^{j2\pi d_{\mathrm{b}} \cos(\theta_{\mathrm{A}})}, \ldots, e^{j2\pi d_{\mathrm{b}}(M_{z}-1) \cos(\theta_{\mathrm{A}})}]^{T},
\end{align*}
respectively. Similarly at the RIS, $\myVM{a}{r,y}{}$ and $\myVM{a}{r,z}{}$ are defined by,
\begin{align*}
& [1, e^{j2\pi d_{\mathrm{r}} \sin(\theta_{\mathrm{D}})\sin(\omega_{\mathrm{D}})}, \ldots, e^{j2\pi d_{\mathrm{r}}(N_{y}-1) \sin(\theta_{\mathrm{D}})\sin(\omega_{\mathrm{D}})}]^{T} \\
& \text{and } [1, e^{j2\pi d_{\mathrm{r}} \cos(\theta_{\mathrm{D}})}, \ldots, e^{j2\pi d_{\mathrm{r}} (N_{z}-1) \cos(\theta_{\mathrm{D}})}]^{T},
\end{align*}
respectively, where $M = M_{y}M_{z}$, $N = N_{y}N_{z}$ with $M_{y}, M_{z}$ being the number of antenna columns and rows at the BS and $N_{y},N_{z}$ being the number of columns and rows of RIS elements. $d_{\mathrm{b}}=0.5$ and $d_{\mathrm{r}}$ are BS/RIS element spacings in wavelength units. Note that the value of $d_{\mathrm{r}}$ is set to satisfy a particular correlation level $\mathrm{sinc}(2 d_{\mathrm{r}}) = \rho_{\mathrm{ru}}$ as per \eq{\ref{Eq: Correlation Model}}.
The steering vectors at the BS and RIS are then given by,
\begin{equation}
\myVM{a}{b}{} = \myVM{a}{b,y}{} \otimes \myVM{a}{b,z}{}
\quad,\quad
\myVM{a}{r}{} = \myVM{a}{r,y}{} \otimes \myVM{a}{r,z}{},
\end{equation} 
respectively. $\theta_{\mathrm{A}}$ and $\omega_{\mathrm{A}}$ are elevation/azimuth angles of arrival (AOAs) at the BS and $\theta_{\mathrm{D}},\omega_{\mathrm{D}}$ are the corresponding angles of departure (AODs) at the RIS. The elevation/azimuth angles are selected based on the following geometry representing a range of LOS $\myVM{H}{br}{}$ links with less elevation variation than azimuth variation: 
$	
\theta_{D} \sim \mathcal{U}[70^{o},90^{o}], \hspace{1em}
\omega_{D} \sim \mathcal{U}[-30^{o},30^{o}], \hspace{1em}  
\theta_{A} = 180^{o} - \theta_{D}, \hspace{1em} 
\omega_{A} \sim \mathcal{U}[-30^{o},30^{o}] 
$. For all results in this paper we use a single sample from this range of angles given by $\theta_{D}=77.1^{o}, \omega_{D}=19.95^{o}, \theta_{A}=109.9^{o}, \omega_{A}=-29.9^{o}$.

In Fig.~\ref{Fig: figure 1}, we verify the mean SNR expression in \eqref{Eq: Mean SNR Final} for varying values of $N$ and $L_{\mathrm{min}}=0.5,\alpha=1.2,\theta=0.2, \rho_{\mathrm{d}}=0.7, \rho_{\mathrm{ru}} \in \{0,0.95,1\}$. For the special cases of $\rho_{\mathrm{ru}}=0$ and $\rho_{\mathrm{ru}}=1$, we use the expressions \eqref{Eq: F when rhors is zero} and \eqref{Eq: F when rhors is 1} to compute the variable $F$, respectively. For all correlation and RIS size scenarios, the analytical mean SNR agrees with simulations. {{Notice that even with PDL, the mean SNR grows as $\mathcal{O}(N^2)$, identical to the growth of the mean SNR without PDL \cite{Rayleigh}.}} Also shown is the trivial approximation where every element, $L(\phi_n)$, of the loss matrix is replaced by the mean of the loss function, $\mu_1$. As can be seen, this simple amplitude scaling gives a reasonable lower bound and is a useful rule of thumb.
\begin{figure}[h]
	\centering
	\myincludegraphics{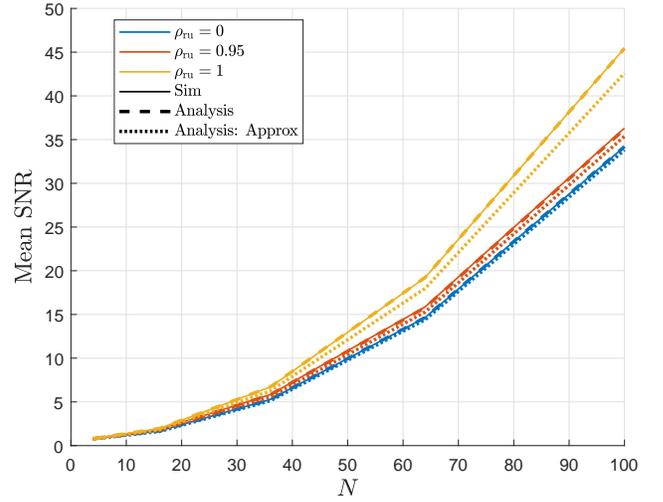}
	\raisecapt\caption{Simulated and analytical mean SNR vs $N$ for $L_{\mathrm{min}}=0.5,\alpha=1.2,\theta=0.2, \rho_{\mathrm{d}}=0.7, \rho_{\mathrm{ru}} \in \{0,0.95,1\}$.}
	\label{Fig: figure 1}
\end{figure}

In Fig.~\ref{Fig: figure 2}, we compute the analytical and simulated mean SNR for varying values of $\alpha$, and $L_{\mathrm{min}} \in \{0.1,0.5,0.95\}, N \in \{16,64\}, \theta \in \{0.2,0.42\}, \rho_{\mathrm{d}}=\rho_{\mathrm{ru}}=0.7$. 
\begin{figure}[h]
	\centering
	\myincludegraphics{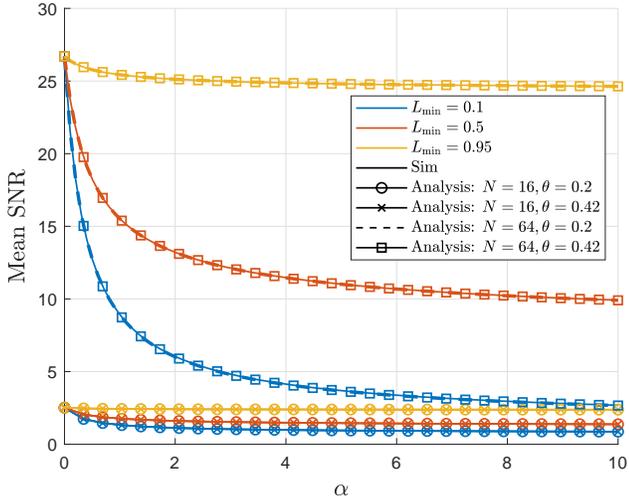}
	\raisecapt\caption{Simulated and analytical mean SNR for $L_{\mathrm{min}}=\{0.1,0.5,0.95\},\theta=\{0.2,0.42\}$.}
	\label{Fig: figure 2}
\end{figure}
It is observed in Fig.~\ref{Fig: figure 2} that the mean SNR monotonically decreases in $\alpha$ which agrees with the analysis in Sec.~\ref{Sec: Impact of Phase Loss on mean SNR}. Increasing the value of $L_{\mathrm{min}}$ increases the mean SNR which also agrees with the analysis. Furthermore, notice that as $L_{\mathrm{min}} \to 1$, the mean SNR becomes nearly constant for all values of $\alpha$ which agrees with analysis in that no loss is observed at $L_{\mathrm{min}} = 1$. As $\alpha \to 0$, the mean SNR converges to the same value for all $L_{\mathrm{min}}$ scenarios as per the analysis. Furthermore, notice that for both $N=16$ and $N=64$ scenarios, the mean SNR curves are identical for both $\theta=0.2$ and $\theta=0.42$. Hence, offsetting the loss function \eqref{Eq: Phase Loss Function} by $\theta$ does not affect the mean SNR as shown in the analysis

Next, we further examine how $\alpha$ and $L_{\mathrm{min}}$ affect the mean SNR by considering different RIS sizes. In Fig.~\ref{Fig: figure 2}, the initial drop off in mean SNR is steeper for the $N=64$ scenario compared to the scenario where $N=16$. Hence, as the number of RIS elements increases, the initial drop off in mean SNR is more pronounced. Also notice in Fig.~\ref{Fig: figure 2} that when $N=16$, the separation gap between the mean SNR curves for the three $L_{\mathrm{min}}$ values is smaller than those in the case of $N=64$. Therefore, as the number of RIS elements increases, altering $L_{\mathrm{min}}$ has a greater effect on the mean SNR.  
Typical parameter values $L_{\mathrm{min}}=0.2,\alpha=1.6$ are given in \cite{PracPhaseShift}. From Fig.~\ref{Fig: figure 2}, we see that the drop in SNR for these parameters and $N=64$ is bracketed by the $\alpha \in \{0.1,0.5\}$ curves and is between 48\% and 74\%. Hence we can expect a significant reduction in mean SNR for practical RIS systems.

\section{Conclusion}\label{Sec: Conclusion} 
{{We derive an exact closed form expression for the mean SNR where the RIS elements experience PDL. Specifically, the amplitude of the reflections from the RIS element are dependent on the optimal RIS phases which maximize the SNR in the absence of PDL. The attenuation function used for PDL was developed from a detailed circuit analysis of RIS elements, and is dependent on three parameters which control the minimum amplitude, steepness and shift of the attenuation function. We analytically characterise the impact of PDL on the mean SNR, offering insight into how PDL impacts the mean SNR performance. The analysis shows that the mean SNR only depends on the minimum amplitude and the steepness parameters. Having a larger minimum amplitude increases the mean SNR and having a steeper attenuation function decreases the mean SNR. This effect is enhanced when the number of RIS elements increases.}}

 

\begin{appendices}
\section{Derivation of mean SNR}\label{App: Derivation of mean SNR}
For ease of notation, we define the three terms in the SNR expression \eqref{Eq: SNR Eq} by
$
	\text{SNR} \triangleq \bar{\tau}\left( S_1 + S_2 + S_3 \right).
$
We then compute $\Exp{\text{SNR}}$ by considering each term in the expression.

\noindent \textbf{Term 1}:
Using \cite[Eq.~(52)]{RayleighRicean} and \eqref{Eq: channels hd hru}, we have
\begin{equation}
    \Exp{S_1} = \betad M \label{Eq: Mean S1}.
\end{equation}

\noindent \textbf{Term 2}:
Substituting the optimal RIS matrix \eqref{Eq: Optimum PHI} and the channels $\hd{},\Hbr{},\hru{}$ from Sec.~\ref{Sec: Channel Model} into $S_2$,
\begin{align}\label{App: Term 2 part 1}
    &\Exp{S_2} = 2\sqrt{\betabr} \notag \\
    &\times \Re\left\{ \Exp{
    \hd{H}\ab{}\ar{H}\psi\Diag{\ar{}} \Diag{e^{-j\angle\hru{}}} 
    L\left( \myVM{\Phi}{}{} \right) \hru{}
    } \right\} \notag \\
    &=
    2\sqrt{\betabr}\Re\left\{ \Exp{
    \Abs{\ab{H}\hd{}}{} \mathbf{1}_{N}^{T} L\left( \myVM{\Phi}{}{} \right) \Diag{e^{-j\angle\hru{}}} \hru{}
    } \right\} \notag \\
    &=
    2\sqrt{\betabr}\Re\left\{ \Exp{
    \Abs{\ab{H}\hd{}}{} \mathbf{1}_{N}^{T} L\left( \myVM{\Phi}{}{} \right) \Abs{\hru{}}{}
    } \right\}.
\end{align}
The matrix $L\left( \myVM{\Phi}{}{}\right)$ 
depends on $e^{j\angle \ab{H}\hd{}}$ and $e^{-j\angle\hru{}}$. Hence, 
\begin{align}
    &\Exp{S_2} = 
     2\sqrt{\betabr}\sum_{r=1}^{N} \Exp{\Abs{\ab{H}\hd{}}{}} \Exp{L(\phi_{r})} \Exp{\Abs{h_{\mathrm{ru},r}}{}},\notag
\end{align}
which is obtained by realising that $L(\phi_{r})$ is independent of both 
$\Abs{\ab{H}\hd{}}{}$ and $\Abs{h_{\mathrm{ru},r}}{}$. Noting that $\phi_r = \angle\ab{H}\hd{} + \angle(\ar{})_r - \angle h_{\mathrm{ru},r} \sim \mathcal{U}[0,2\pi]$, it follows that
\begin{equation}\label{Eq: Mean single generic loss function}
    \Exp{L(\phi_{r})} = \frac{1}{2\pi}\int_{0}^{2\pi} L(x) \ dx.
\end{equation}
Note that \eqref{Eq: Mean single generic loss function} is a generic calculation for any loss function. For the loss function given by \eqref{Eq: Phase Loss Function},
\begingroup
\allowdisplaybreaks
\begin{align}
    \Exp{L(\phi_{r})}
    &=
    \frac{1-L_{\mathrm{min}}}{2^\alpha 2\pi} \int_{0}^{2\pi}
    (1 + \sin(x + \theta))^\alpha \ dx + L_{\mathrm{min}} \notag \\
    &\overset{(a)}{=}
    \frac{4^\alpha(1-L_{\mathrm{min}})}{\pi}\mathrm{B}\left( \frac{2\alpha+1}{2}, \frac{2\alpha+1}{2} \right) + L_{\mathrm{min}} \notag \\
    &\triangleq \mu_1 \label{Eq: App mu1},
\end{align}
\endgroup
where $(a)$ uses App.~\ref{App: Reused sine Integral} to evaluate the integral. To complete the solution for $\Exp{S_2}$, we need to compute $\Exp{\Abs{\ab{H}\hd{}}{}}$ and 
$\sum_{r=1}^{N} \Exp{\Abs{h_{\mathrm{ru},r}}{}}$ which can be computed exactly using \cite[Eq.~(22)]{ISinghRayleigh}. Hence,
\begin{equation}
    \Exp{S_2}
    =
    \sqrt{\betabr\betad\betaru}\Norm{\myVM{R}{d}{1/2}\ab{}}{}{}N\mu_1\frac{\pi}{2} \label{Eq: Mean S2}.
\end{equation}
\noindent \textbf{Term 3}:
Substituting the optimal RIS matrix, \eqref{Eq: Optimum PHI}, and the channels $\Hbr{},\hru{}$ from Sec.~\ref{Sec: Channel Model} into $S_3$,
\begingroup
\allowdisplaybreaks
\begin{align}
    &\Exp{S_3}
    = \betabr
    \mathbb{E}\Big\{
    \hru{H} L\left( \myVM{\Phi}{}{} \right)\frac{\hd{H}\ab{}}{\Abs{\ab{H}\hd{}}{}} \Diag{e^{j\angle\hru{}}} \Diag{\ar{H}}
    \ar{}   \notag \\ 
    & \qquad \times \ab{H} \ab{} \ar{H} \frac{\ab{H}\hd{}}{\Abs{\ab{H}\hd{}}{}} \Diag{\ar{}} \Diag{e^{-j\angle\hru{}}} L\left( \myVM{\Phi}{}{} \right) \hru{}
    \Big\} \notag \\
    &= \betabr M 
    \mathbb{E}\Big\{
    \hru{H} L\left( \myVM{\Phi}{}{} \right) \Diag{e^{j\angle\hru{}}} 
    \mathbf{1}_N \mathbf{1}_N^T \Diag{e^{-j\angle\hru{}}} \notag \\
    & \qquad \times L\left( \myVM{\Phi}{}{} \right)
    \hru{}
    \Big\} \notag \\
    &=
    \betabr M \sum_{r=1}^{N}\sum_{s=1}^{N} \mathbb{E}\Big\{
    h_{\mathrm{ru},r}^* L(\phi_{r}) e^{j\angle h_{\mathrm{ru},r}}
    e^{-j\angle h_{\mathrm{ru},s}} L(\phi_{s}) h_{\mathrm{ru},s}
    \Big\} \notag \\
    &= \betabr M 
    \Bigg( \sum_{r=1}^{N} \Exp{\Abs{h_{\mathrm{ru},r}}{2} } \Exp{L^2(\phi_{r})} \notag \\
    & \qquad + \underset{r \neq s}{\sum_{r=1}^{N} \sum_{s=1}^{N}}
    \Exp{\Abs{h_{\mathrm{ru},r}}{}\Abs{h_{\mathrm{ru},s}}{} } 
    \Exp{L(\phi_{r})L(\phi_{s})} \Bigg) \label{App: Term 3 part 1}. 
\end{align}
\endgroup
The first term in \eqref{App: Term 3 part 1} requires $\Exp{\Abs{h_{\mathrm{ru},r}}{2}} = \betaru$. To obtain $\Exp{L^2(\phi_{r})}$,  
we expand the square of \eqref{Eq: Phase Loss Function},
\begingroup
\allowdisplaybreaks
\begin{align}
    L^2(\phi_{r})
    &=
    (1-L_{\mathrm{min}})^2 2^{-2\alpha} 
    \left( {\sin(\phi_r + \theta) + 1} \right)^{2\alpha} + L^2_{\mathrm{min}} \notag \\
    & \quad + 2^{-\alpha-1}L_{\mathrm{min}}(1-L_{\mathrm{min}})
    \left( {\sin(\phi_r + \theta) + 1} \right)^{\alpha} \notag \\
    &\triangleq L_1 + L_2 + L_3.
\end{align}
\endgroup
The mean of the first term is
\begingroup
\allowdisplaybreaks
\begin{align}
    \Exp{L_1} 
    &= 
    \frac{(1 - L_{\mathrm{min}})^2}{2 \pi 4^{\alpha}}\int_{0}^{2\pi} (1 + \sin(x + \theta))^{2\alpha} \ dx \notag \\
    &\overset{(a)}{=}
    \frac{(1 - L_{\mathrm{min}})^2 16^\alpha}{\pi} \mathrm{B}\left( \frac{4\alpha+1}{2}, \frac{4\alpha+1}{2} \right) \label{Eq: L_1},
\end{align}
\endgroup
where, in $(a)$, App.~\ref{App: Reused sine Integral} is used to evaluate the integral. 
The mean of the second term is simply $\Exp{L_2} = L^2_{\mathrm{min}}$.
The mean of the third term is,
\begin{align}
    \Exp{L_3}
    &\overset{(a)}{=}
    \frac{L_{\mathrm{min}}(1 - L_{\mathrm{min}}) 2^{2\alpha+1}}{\pi} \mathrm{B}\left( \frac{2\alpha+1}{2}, \frac{2\alpha+1}{2} \right) \label{Eq: L_3},
\end{align}
where, in $(a)$, App.~\ref{App: Reused sine Integral} is used to evaluate the integral. 
Summing the three expectations, we have
\begingroup
\allowdisplaybreaks
\begin{align}
    \Exp{L^2(\phi_{r})}
    &=
    \frac{L_{\mathrm{min}}(1 - L_{\mathrm{min}}) 2^{2\alpha+1}}{\pi} \mathrm{B}\left( \frac{2\alpha+1}{2}, \frac{2\alpha+1}{2} \right) \notag \\
    & +
    L^2_{\mathrm{min}}
    +
    \frac{(1 - L_{\mathrm{min}})^2 16^\alpha}{\pi} \mathrm{B}\left( \frac{4\alpha+1}{2}, \frac{4\alpha+1}{2} \right) \notag \\
    &\triangleq \mu_2 \label{Eq: mean L2}.
\end{align}
\endgroup

The second term in \eqref{App: Term 3 part 1} requires $\Exp{\Abs{h_{\mathrm{ru},r}}{}\Abs{h_{\mathrm{ru},s}}{} } $ and $\Exp{L(\phi_{r})L(\phi_{s})}$. Using \cite[Eq.~(11)]{Shuang} we have
\begin{equation}
    \Exp{\Abs{h_{\mathrm{ru},r}}{}\Abs{h_{\mathrm{ru},s}}{} }
    =
    \dfrac{\pi}{4}\left( 1 - \left\lvert\rho_{ik}\right\rvert^2\right)^2 {}_{2}F_{1}\left(\frac{3}{2},\frac{3}{2};1;\left\lvert\rho_{ik}\right\rvert^2 \right) \label{Eq: Part hypergeometric},
\end{equation}
where ${}_{2}F_{1}(\cdot)$ is the Gaussian hypergeometric function and $\rho_{ij} = \left(\myVM{R}{ru}{} \right)_{ij}$. 

The final expectation required is $\Exp{L(\phi_{r})L(\phi_{s})}$. Let $x = \angle h_{\mathrm{ru},r}, y = \angle h_{\mathrm{ru},s}$, then the joint density of phases $x,y$ is given by \cite[Eq.~(3.12)]{KSMiller},
\begingroup
\allowdisplaybreaks
\begin{align}
    f_{X,Y}(x,y) &= \frac{1 - \Abs{\rho_{rs}}{2}}{8 \pi^2 } \frac{\partial^2}{\partial\lambda^2}(\cos^{-1}(\lambda))^2 \notag \\
    &= - \frac{1 - \Abs{\rho_{rs}}{2}}{4 \pi^2 } \frac{\partial}{\partial\lambda}\left( \cos^{-1}(\lambda) (1-\lambda^2)^{-1/2} \right) \notag \\
    &=
    \frac{1 - \Abs{\rho_{rs}}{2}}{4 \pi^2 } \left(
    \frac{1}{1 - \lambda^2} - \frac{\lambda \cos^{-1}(\lambda)}{(1 - \lambda^2)^{3/2}}
    \right) \notag \\
    &\triangleq g(x-y), \label{Eq: g(x-y)}
\end{align}
\endgroup
with 
\begin{align}
    \lambda &= \Abs{\rho_{rs}}{} \cos(x-y-\angle(-\rho_{rs})).
\end{align}
Recall that each optimal RIS phase is $\phi_r = \angle\ab{H}\hd{} + \angle(\ar{})_r - \angle h_{\mathrm{ru},r}$. To obtain the joint density of $\phi_{r},\phi_{s}$ defined by $f_{r,s}(x,y)$, let $Z = \angle\ab{H}\hd{}$, $a = \angle(\ar{})_r$ and $b = \angle(\ar{})_s$. Then conditioned on $Z=z$, we have the conditional PDF
\begingroup
\allowdisplaybreaks
\begin{align*}
    f_{r,s|Z}(u,v \lvert z) &= f_{X,Y}(z+a-u,z+b-v) \notag \\
    &= g(v-u+a-b),
\end{align*}
\endgroup
where the domain of $u$ and $v$ is $z+a-2\pi \leq u \leq z+a$ and $z+b-2\pi \leq v \leq z+v$ respectively. This gives,
\begingroup
\allowdisplaybreaks
\begin{align}
    &\Exp{L(\phi_{r})L(\phi_{s})}
    =
    \mathbb{E}\left\{ \mathbb{E}\left\{ L(\phi_{r})L(\phi_{s}) \lvert Z \right\} \right\} \notag \\
    &=
    \int_{0}^{2\pi} \frac{1}{2\pi} 
    \int_{z+b-2\pi}^{z+b}
    \int_{z+a-2\pi}^{z+a}
    L(u) L(v) \notag \\
    & \qquad \times g(v-u+a-b) \ dudvdz \notag.
\end{align}
\endgroup
Let $s=u+2\pi-a$ and $t=v+2\pi-b$, then
\begingroup
\allowdisplaybreaks
\begin{align}
    &\Exp{L(\phi_{r})L(\phi_{s})}\notag \\
    &= 
    \int_{0}^{2\pi} \frac{1}{2\pi} 
    \int_{z}^{z+2\pi}
    \int_{z}^{z+2\pi}
    L(s+a-2\pi) L(t+b-2\pi) \notag \\
    & \qquad \times g(t-s) \ dsdtdz \notag \\
    &=
    \int_{0}^{2\pi}
    \int_{0}^{2\pi}
    L(s+a-2\pi) L(t+b-2\pi) g(t-s) \ ds dt \label{Eq: Joint expectation phase loss}.
\end{align}
\endgroup
Therefore, the second term of \eqref{App: Term 3 part 1} is given by
\begingroup
\allowdisplaybreaks
\begin{align}
    F &= \underset{r \neq s}{\sum_{r=1}^{N} \sum_{s=1}^{N}} \dfrac{\pi}{4}\left( 1 - \left\lvert\rho_{rs}\right\rvert^2\right)^2 {}_{2}F_{1}\left(\frac{3}{2},\frac{3}{2};1;\left\lvert\rho_{rs}\right\rvert^2 \right)
    \notag \\
    & \quad \times \int_{0}^{2\pi}
    \int_{0}^{2\pi}
    L(s+\angle(\ar{})_r-2\pi) L(t+\angle(\ar{})_s-2\pi) \notag \\
    & \quad \times g(t-s) \ ds dt,
\end{align}
\endgroup
which gives the expectation of the final term as,
\begin{align}
    \Exp{S_3}
    &=
    \betaru \betabr M(N\mu_2 + F) \label{Eq: Mean S3}.
\end{align}

Combining \eqref{Eq: Mean S1}, \eqref{Eq: Mean S2} and \eqref{Eq: Mean S3}, completes the derivation.

\section{$\int_{0}^{2\pi} (1+\sin(x + a))^b \ dx$}\label{App: Reused sine Integral}
Let $a \in \Re$ and $b \geq 0$. Then,
\begingroup
\allowdisplaybreaks
\begin{align}
    &\int_{0}^{2\pi} (1+\sin(x + a))^b \ dx 
    =
    \int_{0}^{2\pi} \left(2 \sin^2\left(\frac{x+a}{2} + \frac{\pi}{4}\right)\right)^b \ dx \notag \\
    &=
    2^{b+2} \int_{0}^{\pi/2} \sin^{2b}(x) \ dx =
    2^{3b+1} \mathrm{B}\left(\frac{2b+1}{2},\frac{2b+1}{2} \right),
\end{align}
\endgroup
where we use \cite[Eq.~(3.621.1)]{GradRyz} in the final step.

\section{$\int_{0}^{2\pi} L(x + a) L(x + b) \ dx$}\label{App: Fancy integral}
Let $A_1 = 1 - L_{\mathrm{min}}$ and $A_2 = L_{\mathrm{min}}$, then,
\begingroup
\allowdisplaybreaks
\begin{align*}
    &L(x + a) L(x + b) =
    A_1A_2 \left(\frac{\sin(x + a + \theta) + 1}{2}\right)^\alpha \notag \\
    &+ A_2^2 
    + A_1A_2 \left(\frac{\sin(x + b + \theta) + 1}{2}\right)^\alpha \notag \\
    & +
    A_1^2 \left(\frac{\sin(x + a + \theta) + 1}{2}\right)^\alpha
    \left(\frac{\sin(x + b + \theta) + 1}{2}\right)^\alpha \notag \\
    & \triangleq L_1 + L_2 + L_3 + L_4.
\end{align*}
\endgroup
Using App.~\ref{App: Reused sine Integral}, we can integrate the first term to obtain
\begin{equation}\label{App: L1}
    \int_{0}^{2\pi} L_1 \ dx = 
    A_1 A_2 2^{2\alpha + 1} \mathrm{B}\left(\frac{2\alpha+1}{2},\frac{2\alpha+1}{2} \right).
\end{equation}
The second term is
\begin{equation}\label{App: L2}
    \int_{0}^{2\pi} L_2 \ dx = 
    2 \pi A_2^2 .
\end{equation}
The third term can be computed using App.~\ref{App: Reused sine Integral} to obtain,
\begin{equation}\label{App: L3}
    \int_{0}^{2\pi} L_3 \ dx = 
    A_1 A_2 2^{2\alpha + 1} \mathrm{B}\left(\frac{2\alpha+1}{2},\frac{2\alpha+1}{2} \right).
\end{equation}
Integrating the fourth term requires more work.
\begingroup
\allowdisplaybreaks
\begin{align}
    &\int_{0}^{2\pi} L_4 \ dx \notag \\
    &=
    \frac{A_1^2}{4^\alpha} \int_{0}^{2\pi}
    (1+\sin(x + a + \theta))^\alpha (1+\sin(x + b + \theta))^\alpha \ dx \notag \\
    &= 
    A_1^2 \int_{0}^{2\pi}
    \left( \sin\left(\frac{x+a}{2}\right) 
    \sin\left(\frac{x+b}{2}\right) \right)^{2\alpha} \ dx \notag \\
    &=
    \frac{A_1^2}{4^\alpha} \int_{0}^{2\pi}
    \left( \gamma - \cos\left(x + \frac{a+b}{2} \right) \right) ^{2\alpha}
    \ dx \notag \\
    &=
    \frac{A_1^2}{4^\alpha} \int_{0}^{2\pi}
    \Abs{\gamma - \cos\left(x + \frac{a+b}{2} \right)}{2\alpha} 
    ( 1 + e^{j 2 \pi \alpha} ) \ dx \notag \\
    &= \frac{A_1^2}{4^\alpha} \left( \mathcal{I}_{\mathrm{real}} + \mathcal{I}_{\mathrm{imag}} \right), 
\end{align}
\endgroup
where $\gamma = \cos(\frac{a+b}{2})$, $\Re\{\mathcal{I}\} = \mathcal{I}_{\mathrm{real}} + \mathcal{I}_{\mathrm{imag}}\cos(2\pi\alpha)$ and $\Im\{\mathcal{I}\} = \mathcal{I}_{\mathrm{imag}}\sin(2\pi\alpha)$ with
\begingroup
\allowdisplaybreaks
\begin{align}
    \mathcal{I} 
    &= \int_{0}^{2\pi}
    \left( \gamma - \cos\left(x + \frac{a+b}{2} \right) \right) ^{2\alpha}
    \ dx \notag \\
    &\overset{(a)}{=}
    2 \pi (\gamma^2 - 1)^\alpha P_{2\alpha}(\gamma_1) \notag \\
    &\overset{(b)}{=}
    2 \pi (\gamma^2 - 1)^\alpha {}_{2}F_{1}\left(-2\alpha,2\alpha+1;1;\frac{1-\gamma_1}{2}\right),
\end{align}
\endgroup
where $\gamma_1 = \gamma/\sqrt{\gamma^2 - 1}$, $(a)$ uses \cite[Eq.~(3.661.3)]{GradRyz} to evaluate the integral and $(b)$ uses a hypergeometric transformation of the Legendre function for arbitrary degrees \cite[Eq.~(8.820.1)]{GradRyz}. Forming a system of linear equation with $\Re\{\mathcal{I}\}$ and $\Im\{\mathcal{I}\}$, the fourth integral is,
\begingroup
\allowdisplaybreaks
\begin{align}
    \int_{0}^{2\pi} L_4 \ dx 
    = 
    \frac{A_1^2 2^{-2\alpha}}{\sin(2\pi\alpha)}
    \begin{bmatrix}
        \sin(2\pi\alpha) & 1-\cos(2\pi\alpha) 
    \end{bmatrix}
    \begin{bmatrix}
        \Re\{\mathcal{I}\} \\ 
        \Im\{\mathcal{I}\}
    \end{bmatrix} \label{App: L4}.
\end{align}
\endgroup
Combining \eqref{App: L1}, \eqref{App: L2}, \eqref{App: L3} and \eqref{App: L4} completes the solution.

\section{$g_{rs}$ is a positive function}\label{App: g_rs is a positive function}
Here, we show that the function $g_{rs}$ given by \eqref{Eq: g_rs} is a positive function. Firstly, we can rewrite $g_{rs}$ as
\begin{equation}
    g_{rs}(x)
    =
    \gamma \left(
    \sqrt{1 - v_{rs}(x)^2} - v_{rs}(x) \cos^{-1}(v_{rs}(x))
    \right),
\end{equation}
where $\gamma = \frac{1-\Abs{\rho_{rs}}{2}}{4 \pi^2 (1 - v_{rs}(x)^2)^{3/2}}$. From \eqref{Eq: v_rs}, we know that $-1 \leq v_{rs}(x) \leq 1$ and since $0 \leq \Abs{\rho_{\mathrm{ru}}}{} \leq 1$, then $\gamma$ is always positive. On the region $-1 \leq v_{rs}(x) < 0$, $v_{rs}(x) \cos^{-1}(v_{rs}(x)) < 0$ since $ \cos^{-1}(v_{rs}(x)) \geq 0$. Hence $g_{rs}$ is positive over $-1 \leq v_{rs}(x) < 0$.

In the region $0 \leq v_{rs}(x) \leq 1$ and noting that $\sqrt{1 - v_{rs}^2(x)} = \sqrt{1 - v_{rs}(x)}\sqrt{1 + v_{rs}(x)}$, we can use \cite[Eq.~(5)]{InvCosIneq} to obtain the inequality,
\begin{equation}\label{Eq: inv cos inequality}
    \cos^{-1}(v_{rs}(x))
    < \left( 2 - \sqrt{\frac{2}{1+v_{rs}(x)}} \right)
    \frac{\sqrt{1 - v_{rs}^2(x)}}{v_{rs}(x)}.
\end{equation}
Since $0 \leq v_{rs}(x) \leq 1$, then we can rewrite  \eqref{Eq: inv cos inequality} as 
\begin{equation}
    v_{rs}(x)\cos^{-1}(v_{rs}(x))
    <
    \sqrt{1 - v_{rs}^2(x)}.
\end{equation}
Hence $g_{rs}$ is also positive over $0 < v_{rs}(x) \leq 1$.


\section{Impact of $\alpha$ on Beta functions}\label{App: Alpha on beta func}
Here, we mathematically characterise the impact of $\alpha$ on the beta functions in \eqref{Eq: mu1 final} and \eqref{Eq: mu2 final}. As $\alpha \to 0$, 
\begin{equation}\label{Eq: trend 1 alpha becomes zero}
    4^\alpha \mathrm{B}\left( \frac{2\alpha+1}{2}, \frac{2\alpha+1}{2}\right)
    \to
    \mathrm{B}\left( \frac{1}{2}, \frac{1}{2}\right)
    \overset{(a)}{=}
    \frac{\Gamma^2(1/2)}{\Gamma(1)}
    =
    \pi,
\end{equation}
and
\begin{equation}\label{Eq: trend 2 alpha becomes zero}
    16^\alpha \mathrm{B}\left( \frac{4\alpha+1}{2}, \frac{4\alpha+1}{2}\right)
    \to
    \mathrm{B}\left( \frac{1}{2}, \frac{1}{2}\right)
    =
    \pi,
\end{equation}
where $(a)$ uses \cite[Eq.~(8.384.1)]{GradRyz}. 

As $\alpha \to \infty$, we use the following asymptotic expressions
\begin{equation}\label{Eq: beta large alpha1}
    4^\alpha\mathrm{B}\left( \frac{2\alpha+1}{2}, \frac{2\alpha+1}{2}\right)
    =
    4^\alpha\frac{\Gamma^2(\alpha + 1/2)}{\Gamma(2\alpha + 1)}
    \overset{(b)}{\sim}
    \sqrt{\frac{\pi}{\alpha}},
\end{equation}
\begin{equation}\label{Eq: beta large alpha2}
    16^\alpha\mathrm{B}\left( \frac{4\alpha+1}{2}, \frac{4\alpha+1}{2}\right)
    =
    16^\alpha\frac{\Gamma^2(\alpha + 1/2)}{\Gamma(2\alpha + 1)}
    \overset{(b)}{\sim}
    \sqrt{\frac{\pi}{\alpha}},
\end{equation}
where $(b)$ uses \cite[Eq.~(6.1.39)]{Stegun} for asymptotic formulas for gamma functions. Clearly, from \eqref{Eq: beta large alpha1} and \eqref{Eq: beta large alpha2}, $4^\alpha \mathrm{B}\left( \frac{2\alpha+1}{2}, \frac{2\alpha+1}{2}\right) \to 0$ and $16^\alpha \mathrm{B}\left( \frac{4\alpha+1}{2}, \frac{4\alpha+1}{2}\right) \to 0$ as $\alpha \to \infty$.

\end{appendices}

\bibliographystyle{IEEEtran}
\bibliography{main.bib}

\end{document}